\title{MacWilliams' Extension Theorem for rank-metric codes}
\author{Elisa Gorla and Flavio Salizzoni}
\date{}
\theoremstyle{definition}
\newtheorem{lemma}{Lemma}[section]
\newtheorem{proposition}[lemma]{Proposition}
\newtheorem{theorem}[lemma]{Theorem}
\newtheorem*{maintheorem}{Main Theorem}
\newtheorem{corollary}[lemma]{Corollary}
\newtheorem{question}[lemma]{Question}
\newtheorem{definition}[lemma]{Definition}
\newtheorem{remark}[lemma]{Remark}
\newtheorem{example}[lemma]{Example}
\newtheorem*{ext}{Extension Property}
\newtheorem{property}{Property}
\newtheorem*{mwthm}{MacWilliams' Extension Theorem}
\newcommand{\Cc}{\mathcal{C}}
\newcommand{\F}{\mathbb{F}}
\newcommand{\Id}{\mathrm{Id}}
\newcommand{\rk}{\mathrm{rank}}
\newcommand{\rs}{\mathrm{rowsp}}
\newcommand{\cs}{\mathrm{colsp}}
\DeclareMathOperator{\rowsp}{rowsp}
\DeclareMathOperator{\colsp}{colsp}
\DeclareMathOperator{\gl}{GL}
\begin{document}

\maketitle

\begin{abstract}
The MacWilliams' Extension Theorem is a classical result by Florence Jessie MacWilliams. It shows that every linear isometry between linear block-codes endowed with the Hamming distance can be extended to a linear isometry of the ambient space. Such an extension fails to exist in general for rank-metric codes, that is, one can easily find examples of linear isometries between rank-metric codes which cannot be extended to linear isometries of the ambient space. In this paper, we explore to what extent a MacWilliams' Extension Theorem may hold for rank-metric codes. We provide an extensive list of examples of obstructions to the existence of an extension, as well as a positive result.
\end{abstract}

\section*{Introduction and motivation}

Coding theory provides tools for the transmission and storage of data over an imperfect channel, where the data may be altered or lost. One of the main goals is being able to automatically correct errors in a received message, without asking for a retransmission. This is done through the use of (error-correcting) codes: The data to be sent is encoded, i.e., transformed into a codeword by adding redundancy to it. The set of codewords is called a code. The codeword travels over the channel, where part of the information may be lost or corrupted. At the receiver's end, the received information is decoded, that is, the error is corrected and the redundancy eliminated. In the mathematical formulation of error-correcting codes, we usually ignore the step in which the redundancy is eliminated, since it does not present any theoretical or practical challenges.

In many scenarios, error correction is done via minimum distance decoding. A code is a subset of a finite metric space and a received message is decoded to the closest codeword. Mathematically, if $(S,d)$ is a finite metric space and $C\subseteq S$ a code, a received $r\in S$ is decoded to an $x\in C$ which minimizes $d(-,r)$. Under suitable assumptions, the $x$ which minimizes $d(-,r)$ is unique. One way to guarantee uniqueness is as follows: Define the minimum distance of a code $C$ as $$d_{\min}(C)=\min\{d(x,y)\mid x,y\in C, x\neq y\}.$$ It is easy to show that, given $r\in S$, if there is an $x\in C$ such that $d(x,r)<(d_{\min}(C)-1)/2$, then $x$ is the unique codeword which minimizes $d(-,r)$. The quantity $(d_{\min(C)}-1)/2$ is often called the error-correction capability of the code. 

This motivates the interest for isometries between codes, since these are the maps that preserve the pairwise distances of codewords, therefore the metric structure of the code, and in particular its error-correction capability. However, one could also look at isometries of the ambient space $\varphi:S\rightarrow S$. Such an isometry does not only preserve the metric structure of the code, mapping $C$ to an isometric code $\varphi(C)$, but also the distance between any pair of elements of $S$, that is $d(x,r)=d(\varphi(x),\varphi(r))$ for any $x,r\in S$. In particular, $\varphi$ preserves the whole error correction procedure, in the sense that $r\in S$ is decoded to $x\in C$ if and only if $\varphi(r)\in S$ is decoded to $\varphi(x)\in\varphi(C)$. In some cases, we know that any isometry between codes is the restriction of an isometry of the ambient space $S$, that is, any isometry between codes can be extended to an isometry of the ambient space. In this paper, we call this property the Extension Property. 

Linear block codes endowed with the Hamming distance are used in point-to-point communication. These are linear subspaces of $\F_q^n$, where $\F_q$ denotes the finite field with $q$ elements. In \cite{MacW62} Florence Jessie MacWilliams showed that every Hamming distance-preserving linear isomorphism $\varphi:\Cc_1\rightarrow\Cc_2$ between two codes in $\F_q^n$ can be extended to a Hamming distance-preserving linear isomorphism $\mu:\F_q^n\rightarrow\F_q^n$. An elementary proof of this fact was later given by Kenneth Bogart, Don Goldberg and Jean Gordon in \cite{BGG78}. Nowadays, this theorem is known as the MacWilliams' Extension Theorem.

\begin{mwthm}
Every linear Hamming weight isometry $\varphi$ of linear codes over a finite field $\F_q$ extends to a linear Hamming weight isometry $\mu$ of the ambient space $\F_q^n$.
\end{mwthm}

In the last decades, there has been an increasing interest in understanding for which ambient spaces and for which weights a similar Extension Property holds. In \cite{Woo97,Woo99} Jay Wood studied the case of finite rings and established the Extension Property for codes over finite Frobenius rings with respect to the Hamming distance. Aleams Barra and Heide Gluesing-Luerssen investigated further the case of finite Frobenius rings with various distance functions in \cite{BG13}. Friedrich Martin Schneider and Jens Zumbrägel extended the work of Wood to Artinian rings in \cite{SZ19}. Recently, the Extension Property was proved in \cite{Dys19,LW19} for codes over $\mathbb{Z}_m$ endowed with the Lee distance.

In this paper, we explore the Extension Property in the setting of rank-metric codes. These are linear spaces of matrices inside $\F_q^{m\times n}$, where $\F_q$ is the finite field with $q$ elements. The rank distance between two matrices is the rank of their difference. Rank-metric codes are useful for correcting errors and increasing the efficiency of data transmission over a network. 

\begin{ext}
Let $\Cc_1,\Cc_2$ be two linear codes in $\F_q^{m\times n}$. A linear isometry $\varphi:\Cc_1\rightarrow\Cc_2$ satisfies the Extension Property if and only if there exists a linear isometry $\mu:\F_q^{m\times n}\rightarrow\F_q^{m\times n}$ such that $\mu |_{\Cc_1}=\varphi$.
\end{ext}

It is well known that there exist isometries of rank metric codes that do not satisfy the Extension Property (see \cite{BG13} and \cite[Section 7]{CKWWv1}). We are interested in understanding under which conditions it may be possible to extend an isometry to the whole ambient space and when instead the Extension Property fails. Very little is know in this direction. The results in \cite{GHFWZ} imply that isometries between two rank support spaces are extendable. The same result for $\F_{q^m}$-isometries between Galois closed linear subspaces of $\F_{q^m}^n$ was proved by Umberto Martínez-Peñas in \cite[Theorem 5]{Mar16}. 

In Section 1, we recall some definitions and results on rank-metric codes. In Section 2 we present an extensive list of obstructions to the Extension Property, providing multiple examples, while in Section 4 we establish the Extension Property in a special case. Section 3 is dedicated to developing some tools that are used in Section 4. Our Main Theorem states that the Extension Property holds for certain isometries of codes generated by elementary matrices. In the appendix, we establish some mathematical facts connected to the proof of the Main Theorem in Section 4.

\section{Preliminaries on rank-metric codes}

Throughout this paper, $q$ is a prime power and $\F_q$ denotes the finite field with $q$ elements. For positive integers $m,n$, we denote by $\F_q^{m \times n}$ the set of $m \times n$ matrices with entries in $\F_q$. We denote by $\rk(M)$ the rank of a matrix $M \in \F_q^{m \times n}$ and by $\dim(V)$ the dimension of an $\F_q$-linear space $V$. 

\begin{definition}
The {\bf rank distance} of $A, B \in\F_q^{m\times n}$ is defined as
\begin{align*}
 d :  \F_q^{m\times n} \times \F_q^{m\times n} & \longrightarrow  \mathbb{N} \\
      (A,B)     \qquad        & \longmapsto      \rk(A-B). \\
\end{align*}
A {\bf rank-metric code} $\mathcal{C}\subseteq\F_q^{m\times n}$ is an $\F_q$-linear subspace endowed with the rank distance. 
\end{definition}

In order to properly state the Extension Property in the context of rank-metric codes, we briefly recall the notion of isometric and equivalent codes.

\begin{definition}
Let $\Cc_1,\Cc_2$ be two linear codes in $\F_q^{m\times n}$. An $\F_q$-linear isomorphism $\varphi:\Cc_1\rightarrow\Cc_2$ such that $\rk(C)=\rk(\varphi(C))$ for all $C\in\Cc_1$ is called {\bf isometry} and $\Cc_1,\Cc_2$ are {\bf isometric}.
\end{definition}

The following classification of the linear isometries of $\F_q^{m\times n}$ is due to Hua \cite{Hua} for odd characteristic and to Wan \cite{Wan} for characteristic 2. The statement can also be found in \cite[Theorem 11.1.9]{Gor21}.

\begin{theorem}\label{classification}
Let $\varphi:\F_q^{m\times n}\rightarrow \F_q^{m\times n}$ be an $\F_q$-linear isometry with respect to the rank metric.
\begin{enumerate}[(a)]
\item If $m\neq n$ then there exist matrices $A\in\mathrm{GL}_m(\F_q)$ and $B\in\mathrm{GL}_n(\F_q)$ such that $\varphi(M)=AMB$ for all $M\in\F_q^{m\times n}$.
\item If $m= n$ then there exist matrices $A,B\in\mathrm{GL}_n(\F_q)$ such that either  $\varphi(M)=AMB$ for all $M\in\F_q^{n\times n}$, or  $\varphi(M)=AM^tB$ for all $M\in\F_q^{n\times n}$.
\end{enumerate}
\end{theorem}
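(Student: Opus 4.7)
The plan is to use that an isometry preserves rank (set $N=0$ in the isometry condition) and therefore sends rank-$1$ matrices to rank-$1$ matrices. I would then analyze the induced action on the \emph{maximal} $\F_q$-subspaces of $\F_q^{m\times n}$ consisting entirely of matrices of rank at most $1$. A short argument, based on the fact that a sum $uv^t+u'(v')^t$ has rank at most $1$ only if $\{u,u'\}$ or $\{v,v'\}$ is linearly dependent, shows that these maximal subspaces come in two families: the row-type subspaces $L_v=\{uv^t:u\in\F_q^m\}$ of dimension $m$, one for each $[v]\in\mathbb{P}(\F_q^n)$, and the column-type subspaces $R_u=\{uv^t:v\in\F_q^n\}$ of dimension $n$, one for each $[u]\in\mathbb{P}(\F_q^m)$.

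Since $\varphi$ is an $\F_q$-linear isomorphism preserving rank, it sends each such maximal subspace to another. When $m\neq n$, a dimension argument (together with the observation that $L_v\cap R_u$ is $1$-dimensional, hence no $L_v$ is contained in any $R_u$) forces $\varphi$ to map the row family to itself and the column family to itself. When $m=n$ the two families have the same dimension and $\varphi$ may swap them; since the transposition $M\mapsto M^t$ is itself an $\F_q$-linear isometry swapping $L_v$ with $R_v$, composing $\varphi$ with it when necessary reduces us to the case where each family is preserved. This composition step is what accounts for the extra branch $\varphi(M)=AM^tB$ in part (b).

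Assume now $\varphi(R_u)=R_{\alpha(u)}$ and $\varphi(L_v)=L_{\beta(v)}$ for bijections $\alpha,\beta$ of the respective projective spaces. Fixing $u_0\neq 0$, the restriction $\varphi|_{R_{u_0}}\colon R_{u_0}\to R_{\alpha(u_0)}$ is an $\F_q$-linear isomorphism and yields, after choosing a representative of $\alpha(u_0)$, some $B\in\gl_n(\F_q)$ with $\varphi(u_0 v^t)=\alpha(u_0)(Bv)^t$. Fixing $v_0\neq 0$ likewise yields $A\in\gl_m(\F_q)$ with $\varphi(u v_0^t)=(Au)\beta(v_0)^t$. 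Comparing the two descriptions on $u_0 v_0^t$ and using the $\F_q$-linearity of $\varphi$ on the rank-$1$ factorizations, I would rescale $A$ so that $\varphi(uv^t)=(Au)(Bv)^t=A(uv^t)B^t$ holds uniformly on rank-$1$ matrices. Since $\F_q^{m\times n}$ is $\F_q$-spanned by its rank-$1$ matrices, additivity then forces $\varphi(M)=AMB^t$ for every $M$, and relabelling $B^t\mapsto B$ yields the stated form.

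The main obstacle I expect is this last coherence step: the scalar ambiguity inherent in writing a rank-$1$ matrix as $uv^t$ must be absorbed so that a single pair $(A,B)$ works simultaneously for every rank-$1$ matrix, not just for those sharing a fixed left or right factor. This is essentially a version of the fundamental theorem of projective geometry adapted to $\F_q^{m\times n}$, and requires a careful cross-check between the row-space and column-space lifts of $\beta$ and $\alpha$.
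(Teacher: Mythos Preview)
The paper does not prove this theorem. It is stated as a classical result, attributed to Hua \cite{Hua} for odd characteristic and to Wan \cite{Wan} for characteristic $2$, with a pointer to \cite[Theorem~11.1.9]{Gor21}; no argument is given in the paper itself. So there is nothing to compare your proposal against here.

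That said, your outline is essentially the standard modern proof and is sound. Two remarks. First, in the square case you should make explicit why $\varphi$ either preserves both families or swaps them globally, rather than mixing: distinct members of the same family meet trivially, while any $L_v$ meets any $R_u$ in a line, and $\varphi$ preserves these intersection dimensions, so the image of the $L$-family is entirely one family. Second, the ``coherence'' obstacle you flag is real but easily resolved without invoking the fundamental theorem of projective geometry. Once you have extracted $B$ from $\varphi|_{R_{u_0}}$ and $A$ from $\varphi|_{L_{v_0}}$ (rescaled so they agree at $u_0v_0^t$), the relations $\varphi(u_0v^t)\in L_{\beta(v)}$ and $\varphi(uv_0^t)\in R_{\alpha(u)}$ already force $\beta(v)\propto Bv$ and $\alpha(u)\propto Au$ for all $u,v$, so $\varphi(uv^t)=c(u,v)\,(Au)(Bv)^t$ with $c(u_0,\cdot)=c(\cdot,v_0)=1$. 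Expanding $\varphi\bigl((u+u_0)(v+v_0)^t\bigr)$ by linearity and comparing with $c(u+u_0,v+v_0)\,A(u+u_0)(B(v+v_0))^t$ in the basis $\{(Au)(Bv)^t,(Au_0)(Bv)^t,(Au)(Bv_0)^t,(Au_0)(Bv_0)^t\}$ gives $c(u,v)=1$ whenever $u\notin\langle u_0\rangle$ and $v\notin\langle v_0\rangle$; the remaining cases are immediate. This closes the gap you were worried about.
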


\begin{definition}
Two codes $\Cc_1,\Cc_2\leq\F_q^{m\times n}$ are {\bf equivalent} if there exists a linear rank-metric isometry $\varphi:\F_q^{m\times n}\rightarrow \F_q^{m\times n}$ such that $\phi(\Cc_1)=\Cc_2$.
\end{definition}

According to these definitions and Theorem \ref{classification}, we can formulate the Extension Property for rank-metric linear codes as follows.

\begin{ext}
Let $\Cc_1,\Cc_2$ be two linear codes in $\F_q^{m\times n}$. An isometry $\varphi:\Cc_1\rightarrow\Cc_2$ satisfies the Extension Property if and only if there exist two matrices $A\in\mathrm{GL}_m(\F_q)$ and $B\in\mathrm{GL}_n(\F_q)$ such that either  $\varphi(M)=AMB$ for all $M\in\Cc_1$, or  $\varphi(M)=AM^tB$ for all $M\in\Cc_1$, where the latter case can only happen if $m = n$.
\end{ext}

\section{Obstructions to the Extension Property}

In this section we discuss several obstructions to the Extension Property in the rank-metric case. A first problem arises from the fact that the transposition is an isometry of the ambient space only in the square case. This makes the composition of the transposition with the natural inclusion of $\iota:\F_q^{m\times m}\hookrightarrow\F_q^{m\times n}$, $m\leq n$, into an $\F_q$-linear isometry of $\iota(\F_q^{m\times m})\subseteq\F_q^{m\times n}$ with itself, which cannot be extended to $\F_q^{m\times n}$. This is a way of looking at the next example, due to Aleams Barra and Heide Gluesing-Luerssen.
\begin{example}[\cite{BG13}, Example 2.9]\label{example1}
Let $\Cc=\{\begin{pmatrix}
A&0
\end{pmatrix}:A\in\F_q^{2\times 2}\}\leq\F_q^{2\times 3}$ and let $\varphi:\Cc\rightarrow\Cc$ be the isometry given by $\varphi(\begin{pmatrix}
A&0
\end{pmatrix})=\begin{pmatrix}
A^t&0
\end{pmatrix}$ for all $A\in\F_q^{2\times 2}$. It is easy to see that it is not possible to extend $\varphi$ to an isometry of the whole ambient space.
\end{example}

A similar phenomenon happens in the next example, also due to Barra and Gleusing-Luerssen. 

\begin{example}[\cite{BG13}, Example 2.9]\label{example2}
Let $\Cc\leq\F_q^{4\times4}$ be the code given by
\begin{equation*}
\Cc=\left\{\begin{pmatrix}
A&0\\0&B
\end{pmatrix}:A,B\in\F_q^{2\times 2}\right\}
\end{equation*}
and consider the isometry $\varphi:\Cc\rightarrow\Cc$ given by 
\begin{equation*}
\varphi\left(\begin{pmatrix}
A&0\\0&B
\end{pmatrix}\right)=\begin{pmatrix}
A&0\\0&B^t
\end{pmatrix}
\end{equation*}
As before, one can check that $\varphi$ cannot be extended to an isometry of $\F_q^{4\times4}$.
\end{example}

In general, the natural inclusion $\iota:\F_q^{m\times m}\times\F_q^{n\times n}\hookrightarrow\F_q^{(m+n)\times(m+n)}$ is an isometry with respect to the sum-rank metric in the domain and the rank metric in the codomain. When composed with the product of the identity on $\F_q^{m\times m}$ and the transposition on $\F_q^{n\times n}$, it yields an isometry of $\iota(\F_q^{m\times m}\times\F_q^{n\times n})\subseteq\F_q^{(m+n)\times(m+n)}$ with itself, which does not extend to $\F_q^{(m+n)\times(m+n)}$.

We stress that, in both examples there is a smaller, natural ambient space to which the isometry can be extended. In fact even more, in those specific examples the isometries are already defined on a smaller ambient space (on which therefore they can be trivially extended). In the first example, the isometry is defined on $\F_q^{2\times 2}$ while in the second example it is defined on $\F_q^{2\times 2}\times \F_q^{2\times 2}$, naturally endowed with the sum-rank metric. In order to avoid such problems, one may want to consider codes that cannot be contained in a smaller ambient space, that is, such that $\rowsp(\Cc)=\F_q^n$ and $\colsp(\Cc)=\F_q^m$. 

\medskip

We now discuss a different obstruction to the Extension Property. Let $\varphi$ be an isometry of $\F_q^{m\times n}$. Then for every $\Cc\leq\F_q^{m\times n}$ we have that 
\begin{equation}\label{r&c}
\dim(\rs(\Cc))=\dim(\rs(\varphi(\Cc)))\text{ and }\dim(\cs(\Cc))=\dim(\cs(\varphi(\Cc))).   
\end{equation} Therefore, in order to be extendable, an isometry must satisfy this property. The next example shows that not all linear isometries do. 

\begin{example}\label{example3}
Let $\Cc_1,\Cc_2\in\F_2^{2\times3}$ be the codes 
\begin{equation*}
\Cc_1=\left\langle\begin{pmatrix}
1&1&0\\0&1&0
\end{pmatrix},\begin{pmatrix}
0&1&0\\1&0&0
\end{pmatrix}\right\rangle\,\,\,\,\,\,\, \Cc_2=\left\langle\begin{pmatrix}
0&0&1\\0&1&0
\end{pmatrix},\begin{pmatrix}
0&1&0\\1&0&0
\end{pmatrix}\right\rangle
\end{equation*}
and let $\varphi:\Cc_1\rightarrow \Cc_2$ be the $\F_2$-linear map given by
\begin{equation*}
\varphi\left(\begin{pmatrix}
1&1&0\\0&1&0
\end{pmatrix}\right)=\begin{pmatrix}
0&0&1\\0&1&0
\end{pmatrix}\,\,\,\,\,\,\,\varphi\left(\begin{pmatrix}
0&1&0\\1&0&0
\end{pmatrix}\right)=\begin{pmatrix}
0&1&0\\1&0&0
\end{pmatrix}\,.
\end{equation*}
Since $\Cc_1$ and $\Cc_2$ are codes of constant rank 2, then $\varphi$ is an isometry. Notice that $\dim(\rs(\Cc_1))=2$ while $\dim(\rs(\Cc_2))=3$. In particular, $\varphi$ cannot be extended to an isometry of $\F_2^{2\times3}$.
\end{example}

The last example motivates us to look at isometries $\varphi:\Cc_1\rightarrow\Cc_2\leq \F_q^{m\times n}$ with the following property, which implies (\ref{r&c}).

\begin{property}\label{propP}
There exist $A\in\mathrm{GL}_m(\F_q)$ and $B\in\mathrm{GL}_n(\F_q)$ such that
$$\rs(\varphi(C))=\rs(CB) \mbox{ and } \cs(\varphi(C))=\cs(AC)$$ for all $C\in\Cc_1$.
\end{property}

Notice that none of the isometries considered in Examples \ref{example1}, \ref{example2} and \ref{example3} satisfy Property \ref{propP}. While Property \ref{propP} is necessary for the Extension Property to hold, it is not sufficient, as the next example shows. 

\begin{example}\label{arxiv_ex}
In \cite[Example 1]{CKWW} the authors exhibit three distinct equivalence classes of MRD codes in $\F_2^{4\times 4}$ with minimum distance $4$. Any $\F_2$-linear map between codes in different equivalent classes is an isometry, since each nonzero element has rank 4. Moreover, each of these maps satisfy Property \ref{propP} with $A=B=\mathrm{Id}$. A proof that these codes do not satisfy the Extension Property appeared in the first arXiv version of the same paper as \cite[Example 7.1]{CKWWv1}.
\end{example}

The obstruction to the Extension Property in Example \ref{arxiv_ex} can be seen as coming from the interaction between the linear structure of the code and the group structure of the code without the zero matrix. More precisely, if $\Cc$ is a vector space of square matrices and $\Cc\setminus\{0\}$ is a subgroup of the general linear group, then  every $\F_q$-linear isomorphism from $\Cc$ to itself is a linear isometry. Moreover, if it fixes the identity and it has the Extension Property, then it is a group homomorphism. Therefore, any $\F_q$-linear isomorphism from $\Cc$ to itself which fixes the identity and is not a group homomorphism cannot have the Extension Property.

\begin{example}
Let $P\in\gl_n(\F_q)$ of order $q^n-1$, let $Q=P^{q-1}$. Let $\Cc=\F_q[P]=\langle P\rangle\cup\{0\}\subseteq\F_q^{n\times n}$. Every nonzero element of $\Cc$ has rank $n$, hence any injective $\F_q$-linear isomorphism of $\Cc$ with itself is an isometry. Both $P$ and $Q$ are linearly independent from the identity matrix $\Id$, so there is a linear isometry $\varphi:\Cc\rightarrow\Cc$ with $\varphi(\Id)=\Id$ and $\varphi(P)=Q$. If $\varphi$ has the Extension Property, then either $\varphi(M)=AMA^{-1}$ or $\varphi(M)=AM^tA^{-1}$ for some $A\in\gl_n(\F_q)$. Therefore $Q=\varphi(P)\in\{APA^{-1},AP^tA^{-1}\}$, however $Q$ has order $q^{n-1}+q^{n-2}+\ldots+1$, while $APA^{-1}$ and $AP^tA^{-1}$ have order $q^n-1$.
\end{example}

Even when $\Cc\setminus\{0\}$ is not a group, an isometry on a set of square matrices which fixes the identity and for which the Extension Property holds needs to be multiplicative. This constitutes an obstruction to the Extension Property, since not every linear isometry is multiplicative.

\begin{example}
Let $\Cc\in\F_2^{3\times3}$ be the code given by
\begin{equation*}
\Cc=\left\{0,\Id,
\begin{pmatrix}
1&0&0\\1&1&0\\0&0&0
\end{pmatrix},
\begin{pmatrix}
0&0&0\\1&0&0\\0&0&1
\end{pmatrix}\right\}
\end{equation*}
and let $\varphi:\Cc\rightarrow \Cc$ be the isometry of $\Cc$ with itself that fixes the identity matrix and swaps the other two matrices.

Suppose that $\varphi$ can be extended to an isometry of the whole ambient space. Then, there are $A,B\in\gl_3(\F_2)$ such that either $\varphi(C)=ACB$ for all $C\in\Cc$ or $\varphi(C)=AC^tB$ for all $C\in\Cc$. Since $\varphi(\Id)=\Id$, we have that $AB=\Id$ and so $B=A^{-1}$. Therefore, we obtain that
\begin{equation*}
\begin{split}
\varphi\left(\begin{pmatrix}
1&0&0\\0&1&0\\0&0&0
\end{pmatrix}\right)&=\varphi\left(\begin{pmatrix}
1&0&0\\1&1&0\\0&0&0
\end{pmatrix}\begin{pmatrix}
1&0&0\\1&1&0\\0&0&0
\end{pmatrix}\right)=\varphi\left(\begin{pmatrix}
1&0&0\\1&1&0\\0&0&0
\end{pmatrix}\right)\varphi\left(\begin{pmatrix}
1&0&0\\1&1&0\\0&0&0
\end{pmatrix}\right)=\\&=\begin{pmatrix}
0&0&0\\1&0&0\\0&0&1
\end{pmatrix}\begin{pmatrix}
0&0&0\\1&0&0\\0&0&1
\end{pmatrix}=\begin{pmatrix}
0&0&0\\0&0&0\\0&0&1
\end{pmatrix}\,.
\end{split}
\end{equation*} 
The map $\varphi$ sends an element of rank $2$ to an element of rank $1$, contradicting the assumption that $\varphi$ is an isometry. We conclude that $\varphi$ does not have the Extension Property. 
Notice however that $\varphi$ satisfies Property \ref{propP} with
$$A=\begin{pmatrix}
0&0&1\\1&1&1\\1&0&0
\end{pmatrix}\text{ and }B=\begin{pmatrix}
1&0&0\\1&0&1\\1&1&0
\end{pmatrix}\,.$$
\end{example}

Property \ref{propP} suggests to look at codes generated by rank-one elements. In fact, if $C$ is a rank-one element with row space $\langle u\rangle$ and column space $\langle v\rangle$, then $\varphi(C)$ is a rank-one element with row space $\langle uB\rangle$ and column space $\langle Av\rangle$. Therefore, $\varphi$ determines $Av$ and $uB$ up to a scalar multiple. This simple observation allows us to prove the next result.

\begin{proposition}\label{proposition:f2}
Let $\Cc_1,\Cc_2\leq\F_2^{m\times n}$ and let $\varphi:\Cc_1\rightarrow\Cc_2$ be an isometry which satisfies Property \ref{propP}. If $\Cc_1$ is generated by elements of rank 1, then $\varphi$ is extendable.
\end{proposition}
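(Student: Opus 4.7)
The plan is to use Property \ref{propP} directly, exploiting the fact that over $\F_2$ every nonzero scalar is $1$, so a rank-one matrix is uniquely determined by its row space and column space.

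First, I would take the matrices $A \in \gl_m(\F_2)$ and $B \in \gl_n(\F_2)$ supplied by Property \ref{propP}, and define $\mu:\F_2^{m\times n}\to\F_2^{m\times n}$ by $\mu(M)=AMB$. By Theorem \ref{classification}, $\mu$ is a linear isometry of the ambient space. It remains only to show that $\mu|_{\Cc_1}=\varphi$. Since both $\varphi$ and $\mu$ are $\F_2$-linear and $\Cc_1$ is spanned by rank-one elements by hypothesis, it suffices to check the equality on a single rank-one element $C\in\Cc_1$.

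So let $C\in\Cc_1$ have rank one and write $C=vu$ with $v\in\F_2^m$ a nonzero column and $u\in\F_2^n$ a nonzero row. Then $\cs(C)=\langle v\rangle$ and $\rs(C)=\langle u\rangle$, so by Property \ref{propP} we get $\cs(\varphi(C))=\cs(AC)=\langle Av\rangle$ and $\rs(\varphi(C))=\rs(CB)=\langle uB\rangle$. Since $\varphi$ is an isometry, $\varphi(C)$ also has rank one. Now here is the crucial point specific to $\F_2$: a rank-one matrix $M\in\F_2^{m\times n}$ is determined by its column space $\langle v'\rangle$ and its row space $\langle u'\rangle$, because the only factorizations $M=\alpha v'\cdot u'$ with $\alpha\in\F_2^\times$ force $\alpha=1$. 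Applying this to $\varphi(C)$, we conclude $\varphi(C)=(Av)(uB)=A(vu)B=ACB=\mu(C)$, which is exactly what we needed.

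The only step that requires the hypothesis $q=2$ is the uniqueness of the rank-one representative inside a given row/column space pair; over larger fields this would determine $\varphi(C)$ only up to a nonzero scalar depending on $C$, and reconciling those scalars across a generating set of rank-one elements is genuinely harder. Apart from that, the argument is essentially immediate once Property \ref{propP} is used to produce the candidate extension $\mu(M)=AMB$, and I do not anticipate any further technical obstacle.
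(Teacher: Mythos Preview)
Your proof is correct and follows essentially the same approach as the paper: take the $A,B$ from Property~\ref{propP}, observe that over $\F_2$ a rank-one matrix is uniquely determined by its row and column spaces, conclude $\varphi(C)=ACB$ on each rank-one generator, and extend by linearity. The paper's version is terser (it writes the conclusion as $A^{-1}\varphi(C)B^{-1}=C$) but the argument is the same.
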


\begin{proof}
Since $\varphi$ has Property \ref{propP}, then $\varphi(C)$ and $ACB$ have the same row and column space for all $C\in\Cc$. Over $\F_2$ this give that $A^{-1}\varphi(C)B^{-1}=C$ for every $C\in\Cc_1$ of rank 1. If $\Cc_1$ is generated by elements of rank 1, we conclude by linearity that $A^{-1}\varphi(C)B^{-1}=C$ for all $C\in\Cc_1$.
\end{proof}

Even for $\Cc$ generated by elements of rank $1$, the Extension Property may fail if we do not require Property \ref{propP}.

\begin{example}
Let $\Cc\subseteq\F_2^{2\times 3}$ be the linear code generated by
\begin{equation*}
    C_1=\begin{pmatrix}
    1&0&0\\0&0&0
    \end{pmatrix},\;\; C_2=\begin{pmatrix}
    0&0&0\\0&1&0
    \end{pmatrix},\;\;
    C_3=\begin{pmatrix}
    0&0&1\\0&0&1
    \end{pmatrix},\;\; C_4=\begin{pmatrix}
    1&1&0\\1&1&0
    \end{pmatrix}.
    \end{equation*}
Let $\varphi:\Cc\rightarrow \Cc$ be the linear map given by $\varphi(C_i)=C_i$ for $i=1,2,3$ and $\varphi(C_4)=C_4+C_3$. One can verify that $\varphi$ is an isometry that cannot be extended to the whole ambient space, since it does not satisfy Property 1.
\end{example}

One may wonder whether the failure of the Extension Property is due to the fact that the code is small compared to the ambient space. The next example shows that this is not the case.

\begin{example}
Starting from the code $\Cc$ from the previous example, for each $n>3$ we construct a code $\Cc_n\in\F_2^{2\times n}$ given by
    \begin{equation*}
        \Cc=\left\{\begin{pmatrix}
        A&C
        \end{pmatrix}:A\in\F_2^{2\times(n-3)},\,C\in\Cc\right\}.
    \end{equation*}
Let $\varphi_n:\Cc_n\rightarrow \Cc_n$ be the linear map given by $\varphi_n\begin{pmatrix}A&0\end{pmatrix}=A$ for $A\in\F_2^{2\times(n-3)}$ and $\varphi_n\begin{pmatrix}0&C\end{pmatrix}=\varphi(C)$. Again, $\varphi_n$ is an isometry that cannot be extended to the whole ambient space. Moreover, notice that
    \begin{equation*}
        \lim_{n\to\infty}\frac{\dim(\Cc_n)}{\dim\left(\F_2^{2\times n}\right)}=\lim_{n\to\infty}\frac{2n-2}{2n}=1.
    \end{equation*}
This show that there exist non-extendable isometries defined on codes, whose dimension comes arbitrarily close to that of the ambient space.
\end{example}

We state the analogous result of Proposition \ref{proposition:f2} for arbitrary $q$ as an open question.

\begin{question}\label{problem1}
Let $\Cc_1,\Cc_2\leq\F_q^{m\times n}$ and let $\varphi:\Cc_1\rightarrow\Cc_2$ be an isometry which satisfies Property \ref{propP}. If $\Cc_1$ is generated by elements of rank 1, then the same is true for $\Cc_2$. If this is the case, does $\varphi$ have the Extension Property?
\end{question}
Our Main Theorem provides a positive answer to Question \ref{problem1}, for codes which are generated by elementary matrices.

Let $1\leq i\leq m$ and $1\leq j\leq n$. We denote by $E_{i,j}$ the matrix in $\F_q^{m\times n}$ that has $1$ in position $(i,j)$ and $0$ everywhere else. We call these matrices elementary. We now state our main result, which we will prove in Section \ref{sect:proof}.

\begin{maintheorem}\label{theorem:elementary}
Let $\Cc=\langle E_{i_1,j_1},\dots,E_{i_k,j_{k}}\rangle\leq\F_q^{m\times n}$ be a code generated by $k$ elementary matrices. Let $\varphi:\Cc\rightarrow\Cc$ be an isometry such that for all $1\leq h\leq k$ one has $\varphi(E_{i_h,j_h})=\alpha_h E_{i_h,j_h}$ for some $\alpha_h\in \F_q^*$. Then $\varphi$ satisfies the Extension Property.
\end{maintheorem}

The next example shows that the statement of the Main Theorem fails, if the code is generated by non-elementary, rank-one matrices.

\begin{example}
Let $q\neq 2$ and let $\Cc\in\F_q^{2\times 4}$ the code generated by the following elements of rank 1:
\begin{equation*}
\begin{split}
&C_1=\begin{pmatrix}
1&0&0&0\\0&0&0&0
\end{pmatrix},\;\;\; C_2=\begin{pmatrix}
0&0&0&0\\0&1&0&0
\end{pmatrix},\\
&C_3=\begin{pmatrix}
0&0&1&0\\0&0&2&0
\end{pmatrix},\;\;\; C_4=\begin{pmatrix}
0&0&0&1\\0&0&0&1
\end{pmatrix},\;\;\; C_5=\begin{pmatrix}
0&0&0&0\\1&1&1&1
\end{pmatrix}.
\end{split}
\end{equation*}
Let $\alpha\in\F_q\setminus\{0,1\}$ and let $\varphi:\Cc\rightarrow\Cc$ be the linear map given by
$\varphi(C_i)=C_i$ for $1\leq i\leq 4$ and $\varphi(C_5)=\alpha C_5$. 
One can check that $\varphi$ is an isometry and that it does not have the Extension Property. In fact, $\varphi$ does not satisfies Property \ref{propP}, since $\rs(C_5-C_2)\leq\rs(\sum_{i=1}^5C_i)$ but $\rs(\varphi(C_5-C_2))\cap\rs(\varphi(\sum_{i=1}^5C_i))=\{0\}$. Notice that, since $\varphi$ does not satisfies Property \ref{propP}, it does not yield a negative answer to Question \ref{problem1}. 
In addition, this example shows that it does not suffice in general to check Property \ref{propP} on a system of generators of the code.
\end{example}

\section{Matrix paths}

In this section we establish some preliminary result which we will use in the proof of the Main Theorem. We start by introducing the notion of path in a matrix. From here on, let $m,n\geq2$.

\begin{definition}\label{defn:path}
Let $M\in\F_q^{m\times n}$ be a matrix. A {\bf path} $\pi$ of length $k\in\mathbb{N}$ in $M$ is a finite ordered sequence of positions of nonzero entries $\left((i_1,j_1),(i_2,j_2),\dots(i_{k},j_{k})\right)$ such that two consecutive elements share either the first or the second component and $(i_h,j_h)\neq(i_{s},j_{s})$ for $h\neq s$.

A path $\pi$ of length at least $4$ is {\bf closed} if the first and the last entries share a component. The {\bf support} $\mathrm{supp}(\pi)$ of a path $\pi$ is the set of elements of $\pi$.
A path $\pi$ is {\bf simple} if no three entries of $\pi$ share a component.
\end{definition}

These definitions are borrowed from graph theory. Indeed, one can naturally associate to every $M\in\F_q^{m\times n}$ a finite graph $G_M=(V_M,E_M)$, such that  $V_M$ is the set of positions of the nonzero entries of $M$ and two vertices in $V_M$ are connected by an edge in $E_M$ if and only if the corresponding entries lay on a common line (that is, a common row or column). The notions of path and closed path from Definition \ref{defn:path} correspond to the usual definitions in graph theory. A path is simple if the subgraph of $G_M$ induced by the set of vertices in the path does not contain any clique.

We are mainly interested in closed simple paths. We begin by establishing some of their basic properties. First notice that, up to a cyclic permutation and to reversing the order, every simple path is determined by its support. Moreover, in the next lemma we see that the entries corresponding to the elements of a closed simple path are contained in a square submatrix with exactly two nonzero elements in each row and column.

\begin{lemma}\label{lemma:twoentries}
Let $M\in\F_q^{m\times n}$ be a matrix.
The entries of $M$ corresponding to the elements of a closed simple path are contained in a square submatrix with exactly two nonzero elements in each row and column.
\end{lemma}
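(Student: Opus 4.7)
The plan is to exploit the alternation forced by simplicity. First, I would observe that as one traverses a closed simple path, the type of sharing between consecutive entries must alternate between row-sharing and column-sharing. Indeed, if two consecutive sharings were of the same type, say both row-sharings, then three successive entries would share a first coordinate, contradicting simplicity. The same reasoning rules out two consecutive column-sharings, and the equality of two consecutive entries is forbidden by the definition of a path.

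Because the path is closed, the alternation around the cycle forces the length to be even, say $k=2\ell$. After a cyclic relabeling one may assume $i_{2h-1}=i_{2h}$ for $h=1,\dots,\ell$ and $j_{2h}=j_{2h+1}$ for $h=1,\dots,\ell-1$, while closure provides $j_{2\ell}=j_1$. Setting $r_h=i_{2h-1}$ and $c_h=j_{2h}$, the path takes the explicit form
\begin{equation*}
(r_1,c_\ell),\,(r_1,c_1),\,(r_2,c_1),\,(r_2,c_2),\,\ldots,\,(r_\ell,c_{\ell-1}),\,(r_\ell,c_\ell),
\end{equation*}
so every path entry lies in the $\ell\times\ell$ block determined by the row indices $\{r_1,\dots,r_\ell\}$ and the column indices $\{c_1,\dots,c_\ell\}$.

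Next I would verify that the $r_h$ are pairwise distinct, and likewise the $c_h$: if two of the $r_h$'s coincided, the four path entries they determine would all lie in a single row, producing three entries sharing a coordinate and contradicting simplicity once more. With this in hand, a direct count shows that row $r_h$ contains exactly the two path entries $(r_h,c_{h-1})$ and $(r_h,c_h)$ (with $c_0=c_\ell$), and column $c_h$ contains exactly $(r_h,c_h)$ and $(r_{h+1},c_h)$ (with $r_{\ell+1}=r_1$). This yields the required $\ell\times\ell$ square submatrix with exactly two path entries per row and per column. The only delicate step is the alternation argument at the beginning; once that is established, the rest of the proof is a matter of bookkeeping the indices modulo $\ell$.
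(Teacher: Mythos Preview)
Your proof is correct and rests on the same key observation as the paper's: simplicity forbids three consecutive path entries from sharing a coordinate, which forces the row/column sharings to alternate around the cycle (equivalently, in the paper's phrasing, every line meeting the path meets it in exactly two positions). The paper reaches the conclusion via a short contradiction and the count $2(\text{rows})=k=2(\text{cols})$, whereas you unpack the same idea constructively into the explicit parametrization $(r_h,c_h)$; the extra bookkeeping buys you a concrete normal form for the path but is not needed for the lemma as stated.
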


\begin{proof}
Let $\pi=\left((i_1,j_1),(i_2,j_2),\dots(i_{k},j_{k})\right)$ be a closed path in $M$. By definition, each line of $M$ contains at most two nonzero entries whose position belongs to the support of $\pi$. Suppose by contradiction that there exists a line in $M$ which contains exactly one nonzero entry in position $(i_h,j_h)$. If $1<h<k$, then the three elements $(i_{h-1},j_{h-1}),(i_h,j_h),(i_{h+1},j_{h+1})$ have either the first or the second coordinate in common. If $h=1$, the same is true for $(i_1,j_1),$ $(i_2,j_2),(i_k,j_k)$. If $h=k$, the same holds for $(i_1,j_1),(i_{k-1},j_{k-1}), (i_k,j_k)$. In each case, $\pi$ is not simple. We conclude that the entries of $M$ corresponding to the elements of a closed simple path are contained in a square submatrix with exactly two nonzero elements in each row and column. In particular, it must be that $2m=2n$ and so $m=n$.
\end{proof}

The next proposition ensures that in every matrix with enough nonzero entries there is a closed simple path.

\begin{proposition}\label{proposition:closedpath}
Let $m,n\geq2$ and let $M\in\F_q^{m\times n}$ be a matrix with at least $m+n$ nonzero entries. Then there is a closed simple path in $M$.
\end{proposition}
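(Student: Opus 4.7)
The plan is to encode $M$ as a bipartite graph and then invoke the standard fact that a graph on $N$ vertices with at least $N$ edges cannot be a forest. I would associate to $M$ the bipartite graph $H_M$ with vertex set $R\sqcup C$, where $R=\{r_1,\dots,r_m\}$ labels the rows and $C=\{c_1,\dots,c_n\}$ labels the columns of $M$, and with an edge $\{r_i,c_j\}$ for every position $(i,j)$ with $M_{i,j}\neq 0$. A path $((i_1,j_1),\dots,(i_k,j_k))$ in the sense of Definition~\ref{defn:path} corresponds under this encoding to a walk in $H_M$ whose $h$-th edge is $e_h=\{r_{i_h},c_{j_h}\}$: two consecutive matrix positions share a row or column precisely when the corresponding consecutive edges share a vertex.

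The central step is to show that, under this correspondence, closed simple paths in $M$ are exactly the cycles of $H_M$. In one direction, reading off the edges of a cycle $v_1-v_2-\dots-v_k-v_1$ of $H_M$ produces a closed simple path in $M$, since each vertex is incident to exactly two of those edges. Conversely, given a closed simple path $\pi$, each pair of consecutive edges $(e_{h-1},e_h)$ shares a unique vertex $v_h$ (uniqueness because $H_M$ has no multiple edges), and the simplicity condition forces $v_1,\dots,v_k$ to be pairwise distinct: any coincidence $v_h=v_{h'}$ would place at least three of the distinct edges $e_{h-1},e_h,e_{h'-1},e_{h'}$ at the common vertex, contradicting simplicity. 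This yields a genuine cycle in $H_M$.

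With the dictionary established, the proposition follows immediately: $H_M$ has $m+n$ vertices and, by hypothesis, at least $m+n$ edges, so it cannot be a forest and must therefore contain a cycle. Since $H_M$ is bipartite, every such cycle has even length at least $4$, matching the length requirement in Definition~\ref{defn:path}. The step that demands the most care is verifying the equivalence between closed simple paths in $M$ and cycles of $H_M$, and in particular the use of the simplicity condition to rule out repeated shared vertices; once that dictionary is confirmed, no further obstacle arises.
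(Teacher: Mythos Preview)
Your argument is correct, and it takes a genuinely different route from the paper. The paper proceeds by induction on $m+n$: it strips away any row or column containing at most one nonzero entry (using Lemma~\ref{lemma:twoentries} to see that such a line cannot meet a closed simple path), observes that the inequality ``number of nonzero entries $\geq$ rows $+$ columns'' is preserved, and eventually reaches a matrix with at least two nonzero entries per line. It then explicitly builds a closed simple path by alternately walking along rows and columns until a line is revisited.

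Your approach replaces all of this with a single invocation of the edge-count bound for forests, after setting up the bipartite incidence graph $H_M$ (rows and columns as vertices, nonzero positions as edges). This is different from the graph $G_M$ the paper associates to $M$ just before Definition~\ref{defn:path}, where the \emph{positions} are the vertices; your choice is exactly what makes the counting trivial, since the hypothesis becomes ``$|E(H_M)|\geq |V(H_M)|$''. The delicate part of your write-up is the dictionary between closed simple paths and cycles of $H_M$, and your justification is sound: if two shared vertices coincided, the set $\{e_{h-1},e_h,e_{h'-1},e_{h'}\}$ of incident edges has size at least three (since $k\geq 4$ rules out the degenerate overlaps), contradicting simplicity. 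The paper's argument is more self-contained and constructive, while yours is shorter and makes the graph-theoretic content explicit; either is perfectly adequate here.
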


\begin{proof}
We proceed by induction on $m+n$. If $m+n=4$ then $m=n=2$ and all the entries of the matrix are nonzero and so trivially we have a closed simple path.

Suppose now that $m+n>4$. If there exists a row in which there is at most one nonzero entry, then $m>2$. By Lemma \ref{lemma:twoentries} no close simple path can contain the position of that entry. Therefore, one may erase that row from $M$ and obtain a matrix of size $(m-1)\times n$ which contains the same paths as $M$. Similarly, one may erase any column of $M$ which contain a single nonzero entry without affecting the paths contained in $M$.

By eliminating all rows and columns of $M$ which contain at most one nonzero entry, we reduce to a matrix which contains at least two nonzero entries in each row and column. Notice that the operation of canceling any rows and columns of $M$ which contain at most one nonzero entry preserves the property that the matrix has at least as many nonzero entries as the sum of its number of rows and its number of columns. We can now build a closed simple path as follows. Starting from an arbitrary nonzero entry, move along the correspondent row and select another nonzero entry. Then move along the column of last nonzero entry picked and select another nonzero entry. Proceed in this way, alternating between rows and columns. At every step, we find a nonzero entry different from the last one that was picked, since we supposed that in each line we have at least two nonzero entries. Since the number of lines is finite, after $k$ steps we must choose an entry on a line where there is already one entry which was picked at a step $h$ with $1\leq h<k-1$. As soon as that happens, we choose that entry. The positions of the entries that we have picked are the support of a closed simple path in $M$.
\end{proof}

\begin{remark}
The result in Proposition \ref{proposition:closedpath} is optimal, in the sense that there are matrices in $\F_q^{m\times n}$ with $m+n-1$ nonzero entries that do not contain any closed simple path. An example is given by
\begin{equation*}
M=\begin{pmatrix}
1&1&\dots&1\\
1&0&\dots&0\\
\vdots&\vdots&\ddots&\vdots\\
1&0&\dots&0
\end{pmatrix}\in\F_q^{m\times n}.
\end{equation*} 
\end{remark} 


\begin{definition}
Let $m,n\geq2$ and $M\in\F_q^{m\times n}$. We say that a matrix $M'\in\F_q^{m\times n}$ is a {\bf path-reduction} - or just a {\bf reduction} - of $M$ if it is obtained from $M$ by changing to zero a nonzero entry that belong to a closed simple path.

A matrix $M\in\F_q^{m\times n}$ is {\bf path-irreducible} - or just {\bf irreducible} - if does not contain any closed simple path.

Let $M_1,\dots, M_{\ell}\in\F_q^{m\times n}$. We say that $(M_1,\dots,M_{\ell})$ is a {\bf path-reduction chain} if for every $1\leq i<\ell$, $M_{i+1}$ is a reduction of $M_i$ and $M_{\ell}$ is irreducible.
\end{definition}

Since in a closed simple path there are at least four entries and a matrix may have more than one closed simple path, a matrix may have several path-reductions. We illustrate the situation in the next simple example. 

\begin{example}
Consider the matrix $M\in\F_2^{3\times 5}$ given by
\begin{equation*}
M=\begin{pmatrix}
1&0&0&1&0\\
0&1&0&1&0\\
1&1&0&0&0
\end{pmatrix}\,.
\end{equation*}
The path $((1,1),(1,4),(2,4),(2,2),(3,2),(3,1))$ is closed and simple. Replacing any of the ones in $M$ yields a reduction of $M$. In particular
\begin{equation*}
M'=\begin{pmatrix}
0&0&0&1&0\\
0&1&0&1&0\\
1&1&0&0&0
\end{pmatrix}\qquad
M''=\begin{pmatrix}
1&0&0&0&0\\
0&1&0&1&0\\
1&1&0&0&0
\end{pmatrix}\,
\end{equation*}
are reductions of $M$. Notice that both $M'$ and $M''$ are irreducible.
\end{example}

The next corollary is an immediate consequence of Proposition \ref{proposition:closedpath}.

\begin{corollary}\label{corollary:irreducible}
Let $M\in\F_q^{m\times n}$. If $M$ is irreducible, than $M$ has at most $m+n-1$ nonzero entries.
\end{corollary}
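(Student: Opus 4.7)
The plan is simply to invoke Proposition \ref{proposition:closedpath} contrapositively. First I would rewrite the desired statement in the form: if the number of nonzero entries of $M$ is at least $m+n$, then $M$ is not irreducible. This reformulation is obtained by contraposition from the corollary, since ``at most $m+n-1$ nonzero entries'' is the negation of ``at least $m+n$ nonzero entries''.

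Next I would observe that this reformulated statement is exactly the conclusion of Proposition \ref{proposition:closedpath}: a matrix with at least $m+n$ nonzero entries contains a closed simple path, hence (by the definition of path-irreducibility) is not irreducible. Thus the corollary follows immediately, without any further combinatorial work.

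No real obstacle arises here; the entire content of the corollary is already packaged inside Proposition \ref{proposition:closedpath}, and the corollary simply records the contrapositive as a convenient quantitative bound that will be easier to cite later. The only thing worth making explicit in the writeup is the logical equivalence between ``$M$ is irreducible'' and ``$M$ contains no closed simple path'', which is immediate from the definition of irreducibility given just above.
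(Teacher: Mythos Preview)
Your proposal is correct and matches the paper's approach exactly: the paper states the corollary as an immediate consequence of Proposition~\ref{proposition:closedpath}, which is precisely the contrapositive argument you outline.
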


Given a matrix $M\in\F_q^{m\times n}$, it is always possible to find a path-reduction chain starting from $M$. In fact, one can simply apply consecutive reductions. Since $M$ has a finite number of nonzero entries, one obtains an irreducible matrix in a finite number of steps.

\begin{proposition}\label{proposition:pathreduction}
Let $M\in\F_q^{m\times n}$. Then there exists a path-reduction chain $(M_1,\dots,M_{\ell})$ such that $M_1=M$.
\end{proposition}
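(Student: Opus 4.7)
The plan is to carry out the obvious inductive construction sketched in the paragraph preceding the statement, making sure the termination argument is explicit. I would induct on the number of nonzero entries of $M$, which I denote $N(M)$.

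If $M$ is already irreducible, then the length-one sequence $(M)$ is a path-reduction chain and there is nothing to prove; this handles the base of the induction, and in particular the cases where $N(M)$ is so small that no closed simple path can fit. Otherwise $M$ contains at least one closed simple path $\pi$. Choose any position $(i,j)\in\mathrm{supp}(\pi)$ and define $M_2$ to be the matrix obtained from $M$ by setting the $(i,j)$ entry to zero. By the definition of reduction, $M_2$ is a reduction of $M$, and $N(M_2)=N(M)-1$.

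Now I would apply the inductive hypothesis to $M_2$: there exists a path-reduction chain $(M_2,M_3,\dots,M_\ell)$ with $M_\ell$ irreducible. Prepending $M_1=M$ yields the chain $(M_1,M_2,\dots,M_\ell)$, in which every consecutive pair is a reduction step and the final term is irreducible; this is a path-reduction chain starting from $M$. The induction terminates because $N$ is a non-negative integer that strictly decreases at every reduction step, so after at most $N(M)$ steps the process must stop, and it can only stop at an irreducible matrix (by construction, whenever the current matrix is reducible we perform another step).

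There is essentially no obstacle here: the statement is a termination claim, and the monovariant $N(\cdot)$ is strictly decreasing under any reduction. The only subtlety worth mentioning is that the choice of closed simple path and the choice of entry to zero out within that path are both non-canonical, so the resulting chain is not unique; but the statement only asserts existence, so any choice will do.
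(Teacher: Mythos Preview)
Your argument is correct and is exactly the approach the paper takes: the paper does not give a formal proof but only the sentence-level sketch immediately preceding the proposition (apply reductions repeatedly; the process terminates because the number of nonzero entries is finite), and your induction on $N(M)$ is precisely the formalization of that sketch. The remark on non-uniqueness is also consistent with the paper, which addresses this point separately in the appendix.
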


Notice that one can find more than one path-reduction chain starting with the same matrix $M$. In Appendix \ref{appendix} we prove that each path-reduction chain with $M_1=M$ has the same length.

\begin{example}
Let $M\in\F_2^{3\times 3}$ be the matrix 
\begin{equation*}
M=\begin{pmatrix}
1&1&0\\
1&1&1\\
0&1&1
\end{pmatrix}\,.
\end{equation*}
Both
\begin{equation*}
\left(\begin{pmatrix}
1&1&0\\
1&1&1\\
0&1&1
\end{pmatrix},
\begin{pmatrix}
0&1&0\\
1&1&1\\
0&1&1
\end{pmatrix},
\begin{pmatrix}
0&1&0\\
1&1&1\\
0&1&0
\end{pmatrix}\right),
\end{equation*}
and
\begin{equation*}
\left(\begin{pmatrix}
1&1&0\\
1&1&1\\
0&1&1
\end{pmatrix},
\begin{pmatrix}
1&1&0\\
1&0&1\\
0&1&1
\end{pmatrix},
\begin{pmatrix}
1&1&0\\
1&0&1\\
0&1&0
\end{pmatrix}\right)
\end{equation*}
are path-reduction chains starting with $M$.
\end{example}

\section{Proof the Main Theorem}\label{sect:proof}

In order to clarify the structure of the proof of the Main Theorem, we enclose part of it in two technical lemmas. The first one shows under which conditions two maps coincide on a closed simple path.

\begin{lemma}\label{lemma:equalmaps}
Let $M\in\F_q^{m\times n}$ and let $((i_1,j_1),\dots,(i_k,j_k))$ be a closed simple path in $M$. Let $\varphi,\psi:\langle E_{i_1,j_1},\dots,E_{i_k,j_k}\rangle\rightarrow\langle E_{i_1,j_1},\dots,E_{i_k,j_k}\rangle$ two rank-preserving linear maps such that $\varphi(E_{i_h,j_h})=s_hE_{i_h,j_h}$ and $\psi(E_{i_h,j_h})=t_hE_{i_h,j_h}$, where 
$s_1,\dots,s_k,t_1,\dots,t_k\in\F_q^*$. If $s_h=t_h$ for $1\leq h<k$, then $s_k=t_k$.
\end{lemma}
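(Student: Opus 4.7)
The plan is to encode the relation among $s_1,\dots,s_k$ (and separately among $t_1,\dots,t_k$) as the vanishing of a single determinant, and then compare the two equations. By Lemma \ref{lemma:twoentries}, the $k$ entries of the closed simple path lie in a $(k/2) \times (k/2)$ submatrix in which every row and every column contains exactly two nonzero positions; in particular $k$ is even. The bipartite graph on $k/2$ rows and $k/2$ columns with edges given by these positions is a single $k$-cycle, hence admits exactly two perfect matchings: the odd-indexed and the even-indexed edges in the path order. Consequently, the Leibniz expansion of the submatrix determinant has only two nonzero terms and, for arbitrary coefficients $\alpha_1,\dots,\alpha_k$, reads
$$\epsilon_1\,\alpha_1\alpha_3\cdots\alpha_{k-1} \;+\; \epsilon_2\,\alpha_2\alpha_4\cdots\alpha_k, \qquad \epsilon_1,\epsilon_2\in\{\pm 1\}.$$

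Next I would specialize to $\alpha_h = 1$ for $h < k$ and choose $\beta := -\epsilon_1/\epsilon_2 \in \F_q^*$, so that the determinant vanishes. Setting $M := E_{i_1,j_1}+\cdots+E_{i_{k-1},j_{k-1}} + \beta\,E_{i_k,j_k}$, the non-path rows and columns of $M$ are zero, so $\rk(M)$ equals the rank of the submatrix, which is strictly less than $k/2$. Since $\varphi$ preserves rank, $\varphi(M) = s_1 E_{i_1,j_1}+\cdots+s_{k-1}E_{i_{k-1},j_{k-1}} + s_k\beta\,E_{i_k,j_k}$ also has rank less than $k/2$, forcing the determinant of its submatrix to vanish:
$$\epsilon_1\,s_1s_3\cdots s_{k-1} \;+\; \epsilon_2\,s_2s_4\cdots s_{k-2}\cdot s_k\beta \;=\; 0.$$
Running the identical argument with $\psi$ yields the analogous equation with each $s_h$ replaced by $t_h$. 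Substituting $s_h = t_h$ for $h<k$ and subtracting the two equations leaves
$$\epsilon_2\,\beta\,s_2s_4\cdots s_{k-2}\,(s_k - t_k) \;=\; 0,$$
and since every factor other than $s_k - t_k$ lies in $\F_q^*$, we conclude $s_k = t_k$.

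The only genuinely delicate point I anticipate is verifying that the submatrix determinant truly has exactly two nonzero terms with signs $\epsilon_1,\epsilon_2 \in \{\pm 1\}$, so that $\beta$ exists in $\F_q^*$ regardless of characteristic. This is a combinatorial statement about the perfect matchings of a $k$-cycle, and it depends on the submatrix structure from Lemma \ref{lemma:twoentries} together with the fact that a simple path forces a strict alternation between row-moves and column-moves. Once this is in place, the rest is the short linear-algebra cancellation above, and it uses nothing about $\varphi$ and $\psi$ beyond their rank-preserving property and the diagonal hypothesis on elementary matrices.
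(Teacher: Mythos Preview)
Your argument is correct and follows essentially the same approach as the paper: both exploit that the $(k/2)\times(k/2)$ submatrix determined by the closed simple path has determinant affine-linear in the coefficient of $E_{i_k,j_k}$, pick the unique value (the paper's $\bar a$, your $\beta$) making the rank drop below $k/2$, and then use rank preservation of $\varphi$ and $\psi$ to force $s_k=t_k$. Your explicit identification of the two perfect matchings of the bipartite $k$-cycle is in fact a useful refinement, since it shows both coefficients of the affine function are $\pm 1$ and hence justifies $\beta\in\F_q^*$, a point the paper asserts without proof.
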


\begin{proof}
For $a\in\F_q^*$, consider the matrix  $$M_a=\left(\sum_{h=1}^{k-1}E_{i_h,j_h}\right)+aE_{i_k,j_k}.$$
Since $((i_1,j_1),\dots,(i_k,j_k))$ is a closed simple path, by Lemma \ref{lemma:twoentries}, $k$ is even and the nonzero entries of $M_a$ are contained in a square submatrix of size $k/2$, whose determinant is a linear function of $a$. Hence there exists $\bar a\in\F_q^*$ such that
$\rk(M_{\bar a})=k/2-1$ and $\rk(M_a)=k/2$ for all $a\in\F_q\setminus\{\bar a\}$.

Let $M$ be the matrix given by
$$M=\left(\sum_{h=1}^{k-1}s_h^{-1}E_{i_h,j_h}\right)+\bar as_k^{-1}E_{i_k,j_k}.$$
By assumption $\rk(\psi(M))=\rk(M)=\rk(\varphi(M))=k/2-1$. Moreover, if $s_h=t_h$ for $1\leq h<k$, then
$$\psi(M)=\left(\sum_{h=1}^{k-1}E_{i_h,j_h}\right)+t_k\bar as_k^{-1}E_{i_k,j_k}\,.$$
By the uniqueness of $\bar a$ we conclude that $\bar a=t_k\bar as_k^{-1}$, hence $t_k=s_k$.
\end{proof}

The next lemma establish the Extension Property in a special case.

\begin{lemma}\label{lemma:diagonalmatrices}
Let $\varphi:\langle E_{i_1,j_1},\dots,E_{i_k,j_k}\rangle\rightarrow\langle E_{i_1,j_1},\dots,E_{i_k,j_k}\rangle\subseteq \F_q^{m\times n}$ be a rank-preserving linear map such that $\varphi(E_{i_h,j_h})=s_hE_{i_h,j_h}$, where $s_1,\dots,s_k\in\F_q$. If the matrix $M=\sum_{h=1}^kE_{i_h,j_h}$ is irreducible, then there are two diagonal invertible matrices $A\in\F_q^{m\times m}$ and $B\in\F_q^{n\times n}$ such that $$\varphi(C)=ACB$$
for all $C\in\langle E_{i_1,j_1},\dots,E_{i_k,j_k}\rangle$.
\end{lemma}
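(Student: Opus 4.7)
My plan is to reduce the statement to a system of equations $a_{i_h}b_{j_h}=s_h$ for $1\leq h\leq k$ in unknowns $a_1,\dots,a_m,b_1,\dots,b_n\in\F_q^*$, and to solve it by a tree-propagation argument on a bipartite graph attached to $M$. Since $\varphi$ preserves rank and each $E_{i_h,j_h}$ has rank $1$, every $s_h$ lies in $\F_q^*$. Writing $A=\mathrm{diag}(a_1,\dots,a_m)$ and $B=\mathrm{diag}(b_1,\dots,b_n)$, one computes $AE_{i_h,j_h}B=a_{i_h}b_{j_h}E_{i_h,j_h}$, so a solution of this system forces the identity $\varphi(C)=ACB$ on the basis $\{E_{i_1,j_1},\dots,E_{i_k,j_k}\}$ and therefore, by linearity, on the whole code.

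To solve the system, I would associate to $M$ the bipartite graph $H$ with row-vertices $r_1,\dots,r_m$ and column-vertices $c_1,\dots,c_n$ joined by an edge $\{r_{i_h},c_{j_h}\}$ of weight $s_h$ for each $h$. Cycles in $H$ have even length and alternate between row- and column-vertices; reading off the edges of such a cycle yields a sequence of matrix positions that constitutes a closed simple path in $M$ in the sense of Definition \ref{defn:path}. Conversely, by Lemma \ref{lemma:twoentries}, every closed simple path in $M$ visits each of its rows and columns exactly twice and therefore corresponds to a cycle in $H$. Consequently $M$ is irreducible if and only if $H$ is a forest.

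Under this assumption I would process each connected component of $H$ independently: on each tree component, fix a root vertex and set its label equal to $1$, then traverse the tree and label every new vertex by the unique nonzero value that makes the product of the endpoint labels equal the corresponding edge weight. The acyclicity of the tree guarantees that no edge imposes contradictory constraints, and the nonvanishing of the $s_h$ ensures that all assigned values lie in $\F_q^*$. For every row-index $i$ or column-index $j$ that does not appear in the support of $M$, I would set $a_i=1$ or $b_j=1$. The resulting diagonal matrices $A$ and $B$ are invertible and satisfy $a_{i_h}b_{j_h}=s_h$ for all $h$ by construction, which completes the proof. The only mildly delicate step is the translation between closed simple paths in $M$ and cycles in $H$; once this is in place, the tree propagation is elementary and I expect no further obstacle.
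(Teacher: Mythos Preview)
Your proof is correct and follows essentially the same strategy as the paper's: both reduce to solving $a_{i_h}b_{j_h}=s_h$ and then propagate values through the support of $M$, using irreducibility to preclude conflicting assignments. Your explicit bipartite row/column graph $H$ and the equivalence ``$M$ irreducible $\Leftrightarrow$ $H$ is a forest'' package the argument more cleanly, but the paper carries out the same tree propagation by hand, establishing the no-conflict claim via an ad hoc construction of a closed simple path from two overlapping assignment paths.
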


\begin{proof}
We build the matrices $A=(a_{i,j})$ and $B=(b_{i,j})$ step by step. Let $h=1$ and set $a_{i_1,i_1}=1$ and $b_{j_1,j_1}=s_1$. This guarantees that $AE_{i_1,j_1}B=s_1E_{i_1,j_1}$. At each subsequent step, choose $h\in\{1,\ldots,k\}$ among those that have not been previously chosen and such that either $a_{i_h,i_h}$ or $b_{i_h,i_h}$ has been assigned a value, if such an $h$ exists. If $a_{i_h,i_h}$ was already assigned a value, set $b_{j_h,j_h}=a_{i_h,i_h}^{-1}s_h$. If $b_{j_h,j_h}$ was already assigned a value, set $a_{i_h,i_h}=b_{j_h,j_h}^{-1}s_h$. 

Notice that at most one among $a_{i_h,i_h}$ and $b_{j_h,j_h}$ can already have an assigned value. Indeed, assume by contradiction that both $a_{i_h,i_h}$ and $b_{j_h,j_h}$ are fixed. Then there exist two simple paths $(\alpha_1,\dots,\alpha_u)$ and $(\beta_1,\dots,\beta_v)$ such that $\alpha_1=\beta_1=(i_1,j_1)$, $\alpha_u=\beta_v=(i_h,j_h)$ and $\alpha_{u-1}\neq\beta_{v-1}$. Let $z>1$ be the smallest index such that $\alpha_{z}\neq\beta_{z}$. Let $N$ be the inclusion-minimal submatrix of $M$ whose support contains $\{\alpha_{z-1},\dots,\alpha_u,\beta_{z},\dots,\beta_{v-1}\}$. Let $d,e$ be such that $N$ has size $d\times e$. Notice that $d,e\geq2$, since $\alpha_{z-1},\alpha_z$, and $\alpha_u$ are not aligned. If $\beta_{z}$ and $\alpha_{z}$ are not aligned, then every line of $N$ contains at least two nonzero entries. Otherwise, $\alpha_{z-1},\alpha_{z}$, and $\beta_z$ are aligned, then any line that does not pass through the position $\alpha_{z-1}$ contains at least two nonzero entries of $N$. Therefore, in both cases, we have $2\max\{d,e\}$ nonzero entries in a submatrix of size $d\times e$. Since $d+e\leq 2\max\{d,e\}$, by Proposition \ref{proposition:closedpath} there exists a closed simple path in $N$, contradicting the irreducibility of $M$.

If no such $h$ exists, choose any $h$ among those that have not been previously chosen and set $a_{i_h,i_h}=1$ and $b_{j_h,j_h}=s_h$. When all values of $h$ have been considered, set to $1$ all the entries on the diagonal of $A$ and $B$ which have not been assigned a value yet.
\end{proof}

\begin{remark}
The matrix $M$ in Lemma \ref{lemma:diagonalmatrices} is irreducible, which by Corollary \ref{corollary:irreducible} implies that $\dim(\langle E_{i_1,j_1},\dots,E_{i_k,j_k}\rangle)\leq m+n-1$. Notice that $m+n-1$ is the number of degree of freedom of the pair of matrices $A,B$.
\end{remark}

We conclude the section with the proof of the Main Theorem.

\begin{proof}[Proof of the Main Theorem]
If $m=1$ or $n=1$, any injective linear map is a linear isometry and the statement holds. Suppose therefore that $m,n\geq 2$ and let $M=\sum_{h=1}^{k}E_{i_h,j_h}$. By Proposition \ref{proposition:pathreduction} there exists a path-reduction chain $(M,M_2,\dots,M_{\ell})$ with $M_{\ell}$ irreducible. Consider the subset $R\subseteq\{1,\dots,k\}$ such that
$M_{\ell}=\sum_{r\in R}E_{i_r,j_r}$. By Lemma \ref{lemma:diagonalmatrices} there are two invertible matrices $A,B$ such that 
$$ AE_{i_r,j_r} B=\varphi(E_{i_r,j_r}),$$
for all $r\in R$. Following the path-reduction chain and applying $\ell-1$ times Lemma \ref{lemma:equalmaps}, we have that
$AE_{i_h,j_h}B=\varphi(E_{i_h,j_h})$, for $1\leq h\leq k$. By linearity we conclude that $\varphi(C)=ACB$ for all $C\in\Cc$.
\end{proof}

\appendix
\section{Length of path-reduction chains}\label{appendix}

In this appendix, we prove that every path-reduction chain of a matrix $M\in\F_q^{m\times n}$ has the same length.

\begin{remark}
Let $M\in\F_q^{m\times n}$ and let $\sigma_1=((i_1,j_1),\dots,(i_k,j_k))$ and $\sigma_2=((i'_1,j'_1),\dots,(i'_h,j'_h))$ be two closed simple paths. Notice that if $\mathrm{supp}(\sigma_1)\neq \mathrm{supp}(\sigma_2)$, then $\mathrm{supp}(\sigma_1)\nsubseteq \mathrm{supp}(\sigma_2)$ and vice versa.
\end{remark}

In the next lemma, we prove that if $M$ contains two distinct closed single paths, than a path-reduction chain of $M$ has length at least 3.

\begin{lemma}\label{lemma:step1}
Let $M=(m_{ij})\in\F_q^{m\times n}$, let $\sigma_1=((i_1,j_1),\dots,(i_k,j_k))$ and $\sigma_2=((i'_1,j'_1),\dots,(i'_h,j'_h))$ be two closed simple paths such that $\mathrm{supp}(\sigma_1)\neq\mathrm{supp}(\sigma_2)$. If $(i_1,j_1)=(i'_1,j'_1)$, then for each $(i_s,j_s)\in\mathrm{supp}(\sigma_1)\setminus \mathrm{supp}(\sigma_2)$ there is a closed simple path in $M-m_{i_1,j_1}E_{i_1,j_1}$ that contains $(i_s,j_s)$.
\end{lemma}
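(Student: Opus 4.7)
The plan is to encode the statement in an auxiliary bipartite graph and then invoke a classical cycle-space argument. To the matrix $M\in\F_q^{m\times n}$ I associate the bipartite graph $B_M$ with left vertices $r_1,\dots,r_m$ (one per row) and right vertices $c_1,\dots,c_n$ (one per column), and with an edge $e_{i,j}=\{r_i,c_j\}$ for each nonzero entry $m_{i,j}$. I then check that closed simple paths in $M$ are in bijection with simple cycles of $B_M$: successive positions of a path share a row (resp.\ column) exactly when the corresponding edges of $B_M$ share the associated left (resp.\ right) vertex, and the simplicity condition -- no three entries on a common line -- forces each vertex of $B_M$ to appear in at most two of the path's edges, which together with closedness turns the edge set of the path into a simple cycle of $B_M$. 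Under this dictionary, $\sigma_1$ and $\sigma_2$ correspond to two distinct simple cycles $\Sigma_1,\Sigma_2$ of $B_M$ that share the edge $e_v=e_{i_1,j_1}$, and the operation $M\mapsto M-m_{i_1,j_1}E_{i_1,j_1}$ corresponds to deleting $e_v$ from $B_M$.

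Next I consider the symmetric difference $\Sigma_1\triangle\Sigma_2$, viewed as an edge subset of $B_M$. Because $e_v$ lies in both $\Sigma_1$ and $\Sigma_2$, it is absent from $\Sigma_1\triangle\Sigma_2$, which therefore sits inside $B_M-e_v$. Moreover, every vertex has even degree in $\Sigma_1\triangle\Sigma_2$, since each $\Sigma_i$ contributes an even number of edges at every vertex. By the classical decomposition theorem for even subgraphs (Veblen's theorem), the edge set of $\Sigma_1\triangle\Sigma_2$ splits into edge-disjoint simple cycles, each of them a simple cycle of $B_M-e_v$. The hypothesis $(i_s,j_s)\in\mathrm{supp}(\sigma_1)\setminus\mathrm{supp}(\sigma_2)$ translates to saying that the edge $e_{i_s,j_s}$ belongs to $\Sigma_1\setminus\Sigma_2$, and hence lies in exactly one of the simple cycles produced by this decomposition. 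Translating that simple cycle back via the dictionary yields the desired closed simple path in $M-m_{i_1,j_1}E_{i_1,j_1}$ containing $(i_s,j_s)$.

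The main obstacle is setting up the bijection between closed simple paths in $M$ and simple cycles in $B_M$ with full care, in particular verifying that the ``no three aligned'' condition is equivalent to the underlying subgraph of $B_M$ being $2$-regular on its support; once this dictionary is in place, the conclusion is a routine application of the fact that every edge of an even subgraph lies on a simple cycle contained in that subgraph.
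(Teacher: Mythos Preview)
Your argument is correct and is genuinely different from the paper's. The paper proceeds constructively: after normalizing orientations, it concatenates $\sigma_1$ and $\sigma_2$ with the shared vertex removed into a single list $\gamma$, and then builds a growing simple path $\pi_n$ by walking around $\gamma$ starting at $(i_s,j_s)$, with a case analysis that either extends $\pi_{n-1}$ by one step or detects that a closed simple path has already been formed. No external graph-theoretic results are invoked.

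Your route is more structural. By passing to the bipartite row--column incidence graph $B_M$ (positions as \emph{edges}, rather than as vertices as in the paper's $G_M$), closed simple paths become exactly the simple cycles of $B_M$; the key point, which you identify, is that Lemma~\ref{lemma:twoentries} is precisely the statement that the edge set of a closed simple path is $2$-regular on its support. The conclusion then drops out of the cycle-space identity $\Sigma_1\triangle\Sigma_2\subseteq B_M\setminus e_v$ together with Veblen's theorem. This is shorter and conceptually cleaner, and it yields slightly more: the closed simple path you produce lies entirely inside $\mathrm{supp}(\sigma_1)\triangle\mathrm{supp}(\sigma_2)$, whereas the paper's construction only guarantees it lies in $\mathrm{supp}(\sigma_1)\cup\mathrm{supp}(\sigma_2)$. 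The trade-off is that the paper's proof is fully self-contained, while yours imports a (standard, elementary) decomposition theorem.
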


\begin{proof}
Up to reversing the order of $\sigma_2$ and to a transposition, we may suppose without loss of generality that $j_1'=j_2'=j_k=j_1$. As a consequence, also  $i_1=i_2=i_h'=i_1'$.
Consider the list of positions
$$\gamma=(\gamma_1,\dots,\gamma_{h+k-2})=((i_2,j_2),\dots,(i_k,j_k),(i'_2,j'_2),\dots,(i'_h,j'_h)).$$
Notice that $\gamma$ is not always a path, since it can contain more than two entries with the same first or second coordinate, as well as repeated entries. 
Fix an $s$ such that $(i_s,j_s)\in\mathrm{supp}(\sigma_1)\setminus \mathrm{supp}(\sigma_2)$ and let $\gamma_x=(i_s,j_s)$. We now recursively build a finite sequence of simple paths $\pi_n$, whose support is contained in that of $\gamma$ and which start with $\gamma_x$. Let $\pi_1=(\gamma_x)$. 
Suppose that we have constructed $\pi_{n-1}=(p_1,\dots,p_{\ell})$ with $p_1=\gamma_x$ and $p_\ell=\gamma_y$, with $y=x+n-2$ mod. $h+k-2$ and $\ell\geq 2$. 
Let $z=y+1$ mod. $h+k-2$ and define $\pi_n$ as follows:
\begin{itemize}
\item If no two entries of $\pi_{n-1}$ have either the first or the second coordinate in common with $\gamma_z$, then let $\pi_n=(p_1,\ldots,p_\ell,\gamma_z)$.
\item If there exists $1\leq r<t\leq\ell$ such that $p_r,p_t$ and $\gamma_z$ share either the first or the second component, then let $\pi_n=(p_1,\ldots,p_r,\gamma_z)$ if $t=r+1$. Notice that if $t\neq r+1$, then $\pi_{n-1}$ is a closed simple path. 
\end{itemize}
For $n\geq 2$, $\pi_n$ is a simple path of length at least $2$. If for some $n$ we find a closed simple path, then we are done. Else, $\pi_{h+k-2}$ is a closed simple path, since $\gamma_{x-1}$ and $\gamma_x$ lay on a common line and $\gamma_{x-2}$ and $\gamma_{x+1}$ do not.
\end{proof}

The next lemma shows that the length of a path-reduction chain is independent of the order of the reductions.

\begin{lemma}\label{lemma:step2}
Let $M\in\F_q^{m\times n}$ and let $M,M_2,\dots,M_{k+1}$ be a path-reduction chain for $M$. Let $\alpha_{1},\ldots,\alpha_k$ be the ordered list of positions of the entries that we set to zero during the path-reduction chain. Any permutation of the sequence $\alpha_{1},\ldots,\alpha_k$ still yields a path-reduction chain for $M$.
\end{lemma}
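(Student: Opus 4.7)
The plan is to reduce the statement to the case of an adjacent transposition: it suffices to show that if $(M_1,\ldots,M_{k+1})$ is a path-reduction chain with zeroed positions $\alpha_1,\ldots,\alpha_k$, then for every $1\leq i<k$ the sequence $\alpha_1,\ldots,\alpha_{i-1},\alpha_{i+1},\alpha_i,\alpha_{i+2},\ldots,\alpha_k$ also corresponds to a path-reduction chain starting at $M$. Since adjacent transpositions generate the symmetric group, iterating this case handles an arbitrary permutation. Moreover, the matrix obtained at the end depends only on the set $\{\alpha_1,\ldots,\alpha_k\}$ of zeroed positions and not on their order, so irreducibility of the terminal matrix is automatic. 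The content to verify is thus that, after the swap, the entry zeroed at each intermediate step still lies on a closed simple path of the current matrix.

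Focusing on a single swap at step $i$, write $M'=M_i$, $\alpha=\alpha_i$, $\beta=\alpha_{i+1}$, and let $\sigma_1$ be a closed simple path in $M'$ witnessing that $\alpha$ is reducible, and $\sigma_2$ a closed simple path in $M_{i+1}=M'-(M')_{\alpha}E_\alpha$ witnessing that $\beta$ is reducible there. The first half of the swap is easy: every entry on $\sigma_2$ is nonzero in $M_{i+1}$, hence nonzero in $M'$, so $\sigma_2$ is already a closed simple path in $M'$ containing $\beta$, which legitimizes zeroing $\beta$ first. Call $M''$ the matrix obtained from $M'$ by zeroing $\beta$; the remaining task is to exhibit a closed simple path in $M''$ passing through $\alpha$.

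If $\beta\notin\mathrm{supp}(\sigma_1)$, then $\sigma_1$ survives unchanged in $M''$ and we are done. The interesting case is $\beta\in\mathrm{supp}(\sigma_1)$. Here I observe that $\alpha\notin\mathrm{supp}(\sigma_2)$ because $\alpha$ is zero in $M_{i+1}$; in particular, $\mathrm{supp}(\sigma_1)\neq\mathrm{supp}(\sigma_2)$ and $\alpha\in\mathrm{supp}(\sigma_1)\setminus\mathrm{supp}(\sigma_2)$. Cyclically shifting $\sigma_1$ and $\sigma_2$ (and reversing one of them if necessary) so that both begin at $\beta$, I can apply Lemma \ref{lemma:step1} with shared initial position $\beta$ to the matrix $M'$, which directly produces a closed simple path in $M''$ containing $\alpha$. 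This completes the verification of the adjacent swap.

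The main obstacle is exactly this last step: a priori the single closed simple path $\sigma_1$ witnessing the reducibility of $\alpha$ in $M'$ might be destroyed by zeroing $\beta$ first, and one has no global control over what other paths exist. Lemma \ref{lemma:step1} is tailored to rescue precisely this situation, so the heart of the argument is to recognize that its hypotheses hold and to invoke it. Once the adjacent swap is established, the full lemma follows by a standard induction on the number of adjacent transpositions needed to realize an arbitrary $\pi\in S_k$, applying the single-swap result at each intermediate chain.
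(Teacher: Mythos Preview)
Your proposal is correct and follows essentially the same route as the paper: reduce to an adjacent transposition, observe that the later-zeroed path $\sigma_2$ lifts to the earlier matrix so the swap's first half is trivial, and for the second half invoke Lemma~\ref{lemma:step1} (after cyclically shifting both closed simple paths to start at the shared position $\beta$) to produce a closed simple path through $\alpha$ in the matrix with $\beta$ already zeroed. The only differences are cosmetic: you index the swap as $(i,i+1)$ where the paper uses $(i-1,i)$, and you are slightly more explicit about why the terminal matrix is unchanged and about aligning the paths before applying Lemma~\ref{lemma:step1}.
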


\begin{proof}
Since the group of permutation of $k$ elements is generated by the $k-1$ transpositions $(1,2), (2,3), \ldots, (k-1,k)$, it suffices to prove that setting to zero the entries in position
$$\alpha_{1},\ldots,\alpha_{i-2},\alpha_i,\alpha_{i-1},\alpha_{i+1},\ldots,\alpha_k$$ 
in the given order gives a path-reduction chain for $M$, for $i=2,\ldots,k$. This corresponds to the sequence of matrices 
$$M_1,M_2,\ldots,M_{i-1},\bar{M}_{i},M_{i+1},M_{i+2},\ldots,M_{k+1}$$
where we let $M_1=M$. 
By assumption, $M_{k+1}$ is irreducible and $M_{j}$ is a reduction of $M_{j-1}$ for $j=2,\ldots,i-1,i+2,\ldots,k$. 

The matrix $\bar{M}_{i}$ is obtained from $M_{i-1}$ by setting to zero the entry in position $\alpha_i$. Since $\alpha_i$ belongs to a closed simple path $\pi$ in $M_i$ and every nonzero entry in $M_i$ is also a nonzero entry in $M_{i-1}$, then $\pi$ is also a closed simple path in $M_{i-1}$. Therefore, $\bar{M}_{i}$ is a reduction of $M_{i-1}$. In order to prove that $M_{i+1}$ is a reduction of $\bar{M}_{i}$, we need to show that there is a closed simple path in $\bar{M}_{i}$ which contains $\alpha_{i-1}$. Notice that $\bar{M}_{i}$ is equal to $M_i$, except for the entries in position $\alpha_{i-1}$ and $\alpha_i$. By assumption, there are closed simple paths $\sigma_1$ and $\sigma_2$ such that $\sigma_1$ contains $\alpha_{i-1}$ and $\sigma_2$ contains $\alpha_{i}$, but not $\alpha_{i-1}$. If $\sigma_1$ does not contain $\alpha_i$, then it is a closed simple path in $\bar{M}_{i}$ which contains $\alpha_{i-1}$. If instead $\sigma_1$ contains $\alpha_i$, then by Lemma \ref{lemma:step1} there is a closed simple path in $M_{i}$ which contains $\alpha_{i-1}$ but not $\alpha_i$.
This gives a closed simple path in $\bar{M}_{i}$ which contains $\alpha_{i-1}$.
\end{proof}

We are now ready to prove that every path reduction chain of a given matrix has the same length.

\begin{theorem}
Let $M\in\F_q^{m\times n}$ be a matrix. Every path-reduction chain of $M$ has the same length.
\end{theorem}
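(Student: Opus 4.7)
My plan is to reformulate the problem as a statement about bipartite graphs, where the conclusion reduces to the classical fact that every spanning forest of a finite graph has the same number of edges. To each $M\in\F_q^{m\times n}$ I associate the bipartite graph $B_M$ whose vertices are the rows and columns of $M$ containing a nonzero entry, and whose edges are the nonzero entries themselves, with the entry in position $(i,j)$ realized as an edge between the row-vertex $R_i$ and the column-vertex $C_j$. The first step, which unpacks the definition of a simple path and uses Lemma \ref{lemma:twoentries}, is that closed simple paths of length $k$ in $M$ correspond (up to rotation and reflection) to simple cycles of length $k$ in $B_M$: the condition that no three entries of the path share a line translates into the condition that no three edges of the corresponding closed walk meet at a common vertex, which together with closedness forces each visited vertex to be incident to exactly two walk-edges. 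In particular, $M$ is irreducible if and only if $B_M$ is a forest.

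Next I would verify that a single path-reduction $M_i\to M_{i+1}$ amounts to deleting from $B_{M_i}$ an edge $e$ lying on a cycle. Because both endpoints of $e$ retain another cycle-neighbor, this deletion neither isolates any vertex nor disconnects any component. Hence the vertex set and the number of connected components of the associated bipartite graph remain invariant along any path-reduction chain starting at $M$; call these invariants $V$ and $c$.

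Finally, since the terminus $M_\ell$ is irreducible, $B_{M_\ell}$ is a forest on $V$ with $c$ connected components, so it has exactly $|V|-c$ edges. Consequently, every irreducible matrix reachable from $M$ by path-reductions has the same number of nonzero entries, namely $|V|-c$, and since each reduction removes exactly one nonzero entry, every path-reduction chain of $M$ has length equal to the difference between the number of nonzero entries of $M$ and $|V|-c$, a quantity determined by $M$ alone. The only subtle point of the argument is the first-step translation between closed simple paths and simple cycles; the remainder is standard graph theory and, notably, does not explicitly invoke Lemma \ref{lemma:step1} or Lemma \ref{lemma:step2}, which instead provide an alternative, more combinatorial route via an exchange argument on the reordering of reductions.
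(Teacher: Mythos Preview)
Your argument is correct and takes a genuinely different route from the paper's. The paper proceeds by an exchange argument: it first proves Lemma~\ref{lemma:step1} (two distinct closed simple paths through a common position force a closed simple path avoiding that position) and Lemma~\ref{lemma:step2} (the sequence of deleted positions in a path-reduction chain can be permuted arbitrarily), and then uses induction on the maximal chain length together with these lemmas to compare any two chains. Your approach instead computes the chain length directly as a graph invariant: once closed simple paths are identified with simple cycles of the bipartite incidence graph $B_M$, irreducibility becomes acyclicity, a single reduction is the deletion of a non-bridge edge (hence preserves both the vertex set and the number of components), and the terminal forest has $|V|-c$ edges by the standard tree formula. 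This gives the explicit value $|E(B_M)|-|V|+c$, i.e., the cyclomatic number of $B_M$, for the number of reductions, something the paper's argument does not make visible. The trade-off is that the paper's Lemma~\ref{lemma:step2} is a strictly stronger statement (arbitrary reorderability of the deletions) which your proof does not recover; conversely, your proof is shorter and bypasses the somewhat delicate path-surgery of Lemma~\ref{lemma:step1}. One small imprecision: what equals $|E(B_M)|-(|V|-c)$ is the number of reductions, i.e., $\ell-1$, not the length $\ell$ of the chain; this does not affect the conclusion.
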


\begin{proof}
We proceed by induction on the maximum length $\ell$ of a path-reduction chain of $M$. Notice that $\ell\geq 1$ and equality holds if and only if $M$ is irreducible. If $\ell=2$, then $M$ needs to have at least one closed simple path. Moreover, there is an $\alpha$ in the path such every closed simple path in $M$ contains $\alpha$. If $M$ contains two distinct closed simple paths through $\alpha$, then by Lemma \ref{lemma:step1} it also contains a closed simple path that does not pass through $\alpha$. It follows that $M$ contains exactly one closed simple path and every path-reduction chain has length two and is obtained by replacing with zero one of the entries of $M$ in one of the positions on the closed simple path.

Let $M,M_2,\dots,M_{\ell}$ and $M,M_2',\dots, M_k'$ be two path-reduction chains for $M$, $\ell\geq k$. Let $\alpha_1,\dots,\alpha_{k-1}$ and $\beta_1,\dots,\beta_{\ell}$ be the positions of the entries of $M$ that we replace with zero to obtain the path-reduction chains $M,M_2',\dots, M_k'$ and $M,M_2,\dots, M_{\ell}$, respectively. Notice that $M_2,\dots,M_{\ell}$ is a path-reduction chain for $M_2$ and, by the induction hypothesis, every path reduction chain for $M_2$ has length $\ell-1$. Starting from $M_2$, we construct a path-reduction chain $M_2,\bar M_3,\ldots,\bar M_{\ell}$ as follows. At each step $i=1,\ldots,k-1$, if there is a closed simple path that contains $\alpha_i$, we replace the entry in position $\alpha_i$ by zero. We claim that we delete at most $k-2$ entries of $M_2$. In fact, if setting to zero the entries in position $\beta_1,\alpha_1,\ldots,\alpha_{k-1}$ in the prescribed order yields a path-reduction chain of $M$, by Lemma \ref{lemma:step2} so does setting to zero the entries in position $\alpha_1,\ldots,\alpha_{k-1},\beta_1$. This contradicts the assumption that $M,M_2',\dots, M_k'$ is a path-reduction chain. So we have obtained a path-reduction chain for $M_2$ of length $\ell-1\leq k-1$. It follows that $\ell=k$.
\end{proof}

\bibliographystyle{plain}
\bibliography{extensionbib}

\end{document}